
\documentclass[
	paper=A4,
	pagesize=auto,
	fontsize=12pt
]{scrartcl}
\usepackage{libertine}
\recalctypearea

\usepackage[ngerman,british]{babel}
\usepackage[style=alphabetic,url=false,isbn=false,doi=true,eprint=false,backend=biber]{biblatex}
\addbibresource{bibliography.bib}

\usepackage{graphicx}
\usepackage{amsmath,amssymb,amsthm}
\usepackage{csquotes}
\usepackage{bm}

\usepackage{hyperref}
\hypersetup{
  colorlinks
}
\usepackage{cleveref}

\theoremstyle{plain}
\newtheorem{theorem}{Theorem}[section]
\newtheorem{lemma}[theorem]{Lemma}
\newtheorem{corollary}[theorem]{Corollary}

\theoremstyle{remark}
\newtheorem{example}[theorem]{Example}
\newtheorem{remark}[theorem]{Remark}

\theoremstyle{definition}
\newtheorem{definition}[theorem]{Definition}

\newlength{\JZHeightOfX}
\newcommand{\JZOrcidlink}[1]{
\setlength{\JZHeightOfX}{\fontcharht\font`X}
\includegraphics[height=\JZHeightOfX]{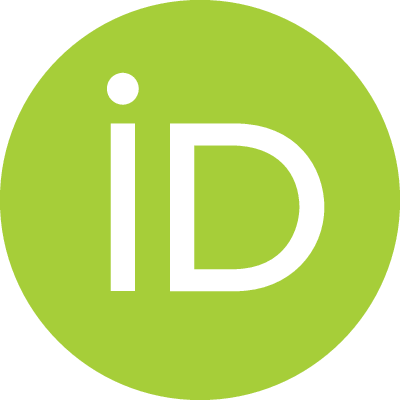}
\href{https://orcid.org/#1}{#1}
}

\begin{document}
\author{Jobst Ziebell\hspace{2em}\JZOrcidlink{0000-0002-9715-6356}\\
\small{Abbe Center of Photonics, Friedrich-Schiller-University, Jena, Germany}}
\title{$\theta$-splitting Densities and Reflection Positivity}
\date{\today}
\maketitle

\begin{abstract}
A simple condition is given that is sufficient to determine whether a measure that is absolutely continuous with respect to a Gaußian measure on the space of distributions is reflection positive.
It readily generalises conventional lattice results to an abstract setting, enabling the construction of many reflection positive measures that are not supported on lattices.
\end{abstract}
\section{Introduction}
Reflection positivity is one of the pillars of Euclidean quantum field theories.
It is readily established for wide sets of Gaußian measures but for non-Gaußian measures, the author feels that - with the exception of measures supported on lattices - there is no general framework that can be easily applied.
For measures that are absolutely continuous with respect to Gaußian measures, that is fixed in this article by introducing the set of $\theta$-splitting functions, which can work as densities to directly generalise the lattice methods used e.g. in \cite{src:GlimmJaffe}.
The result is very simple: Given a $\theta$-invariant reflection positive Gaußian measure and applying a measurable density to it that is $\theta$-splitting, the outcome is a reflection positive measure.

In general, physically relevant measures in $d \ge 3$ dimensions are typically not absolutely continuous with respect to the Gaußian free field measure.
Hence, one still needs to find ways to regularise the models of interest in order to apply the theorems in this work.
However, reflection positivity is preserved by the weak convergence of measures (see e.g. \cite{src:Bogachev:GaußianMeasures}), since it implies the pointwise convergence of corresponding characteristic functions.
Hence, any limit point of a sequence of regularised models corresponding to reflection positive measures is reflection positive as well.
\section{Preliminaries}
A \textbf{locally convex space} is a real topological vector space whose topology is induced by some family of seminorms.
The \textbf{dual} of a locally convex space $X$ equipped with the strong dual topology will be denoted by $X^*_\beta$.
\textbf{Inner products} denoted with round brackets $(\cdot,\cdot)$ are taken to be $\mathbb{R}$-bilinear.
Throughout this work, $d \in \mathbb{N}$ is fixed.
We shall work on the spaces
\begin{equation}
\mathcal{D} := \mathcal{D}(\mathbb{R}^{d+1})
\qquad\text{and}\qquad
\mathcal{D}_+ := \mathcal{D}(\mathbb{R}_{>0} \times \mathbb{R}^d) \, .
\end{equation}
of real test functions with their canonical LF topologies \cite[p. 131-133]{src:Treves:TopologicalVectorSpaces}.
Let us denote the corresponding continuous \textbf{restriction map} by $\pi_+ : \mathcal{D}^*_\beta \to (\mathcal{D}_+)^*_\beta$ (see e.g. \cite[p. 245-246]{src:Treves:TopologicalVectorSpaces}).
$\mathcal{D}$ and $\mathcal{D}_+$ as well as their strong duals $\mathcal{D}^*_\beta$ and $(\mathcal{D}_+)^*_\beta$ are complete \cite[Theorem 13.1]{src:Treves:TopologicalVectorSpaces}, barrelled \cite[p. 347]{src:Treves:TopologicalVectorSpaces}, nuclear spaces \cite[p. 530]{src:Treves:TopologicalVectorSpaces} (hence, reflexive by \cite[p. 147]{src:SchaeferWolff:TopologicalVectorSpaces}) that are also Lusin spaces \cite[p. 128]{src:Schwartz:RadonMeasures} and thus in particular Souslin spaces.
\begin{theorem}[{\cite[Lemma 6.4.2.(ii), Lemma 6.6.4]{src:Bogachev:MeasureTheory2}}]
\label{thm:SouslinBorelProduct}
Let $X$ and $Y$ be Souslin spaces.
Then, the Borel $\sigma$-algebra of $X \times Y$ coincides with the $\sigma$-algebra generated by all products of Borel sets in $X$ and $Y$ respectively.
\end{theorem}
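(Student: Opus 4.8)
The plan is to establish the two inclusions separately, noting that only one of them uses the Souslin hypothesis. Writing $\mathcal{B}(X) \otimes \mathcal{B}(Y)$ for the $\sigma$-algebra generated by all rectangles $A \times B$ with $A \in \mathcal{B}(X)$ and $B \in \mathcal{B}(Y)$, the inclusion $\mathcal{B}(X) \otimes \mathcal{B}(Y) \subseteq \mathcal{B}(X \times Y)$ needs no hypothesis at all: the coordinate projections $p_X \colon X \times Y \to X$ and $p_Y \colon X \times Y \to Y$ are continuous, hence Borel measurable, so for $A \in \mathcal{B}(X)$ and $B \in \mathcal{B}(Y)$ the rectangle $A \times B = p_X^{-1}(A) \cap p_Y^{-1}(B)$ lies in $\mathcal{B}(X \times Y)$; since these rectangles generate $\mathcal{B}(X) \otimes \mathcal{B}(Y)$, the inclusion follows.

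For the reverse inclusion it suffices to show that every open $W \subseteq X \times Y$ belongs to $\mathcal{B}(X) \otimes \mathcal{B}(Y)$. Here I would use that a Souslin space carries a countable \emph{network}: if $\varphi \colon P \to X$ is a continuous surjection from a Polish (hence second countable) space $P$ with countable base $\{B_i\}_{i \in \mathbb{N}}$, then $\mathcal{N}_X := \{\varphi(B_i) : i \in \mathbb{N}\}$ has the property that every open subset of $X$ is a union of members of $\mathcal{N}_X$ contained in it (given $x \in U$, pick $p \in \varphi^{-1}(x)$, then $i$ with $p \in B_i \subseteq \varphi^{-1}(U)$, so $x \in \varphi(B_i) \subseteq U$). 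Construct such a network $\mathcal{N}_Y$ for $Y$ likewise. A space with a countable network is hereditarily Lindelöf, and $\{N \times M : N \in \mathcal{N}_X,\ M \in \mathcal{N}_Y\}$ is a countable network for $X \times Y$; hence the open set $W$, being a subspace of a space with a countable network, is Lindelöf. Covering $W$ by the open boxes $U \times V \subseteq W$ with $U$ open in $X$ and $V$ open in $Y$ and extracting a countable subcover gives $W = \bigcup_{n \in \mathbb{N}} U_n \times V_n$ with each $U_n \times V_n$ a product of open, hence Borel, sets, so $W \in \mathcal{B}(X) \otimes \mathcal{B}(Y)$.

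The step I expect to be the real point is the passage from a countable network to the decomposition of open subsets of $X \times Y$ into \emph{open} boxes. It is tempting to use the network elements $\varphi(B_i)$ of $X$ (and their analogues for $Y$) directly as Borel building blocks, but in a Souslin space such sets are in general only analytic and need not be Borel, so a product of a network element of $X$ with one of $Y$ need not lie in $\mathcal{B}(X) \otimes \mathcal{B}(Y)$; the network should therefore be used only qualitatively, to extract hereditary Lindelöfness, after which one reduces to genuinely open boxes, which are automatically Borel rectangles. The remaining verifications — that a continuous image of a second countable space has a countable network, that a countable network is inherited by subspaces and products, and that it forces the Lindelöf property — are routine. (In the concrete situation of this paper the spaces at hand are even Lusin, so one could alternatively exploit that their Borel structure is standard and argue via a coarser separable metrizable topology with the same Borel sets, but the network argument is shorter and covers the stated generality.)
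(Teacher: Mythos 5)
Your argument is correct: the easy inclusion via the continuity of the projections, and the hard inclusion via the fact that a Souslin space has a countable network (image of a countable base under a continuous surjection from a Polish space), hence the product is hereditarily Lindelöf, so every open set in $X \times Y$ is a \emph{countable} union of open boxes; your caveat that the network elements themselves are only analytic and must be used purely topologically is exactly the right point to flag. The paper offers no proof of its own — it cites Bogachev's Lemmas 6.4.2(ii) and 6.6.4 — and your route is essentially the standard argument behind those cited lemmas, so there is nothing to contrast.
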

In this work, a \textbf{measure} is taken to be a countably additive nonnegative function on a $\sigma$-algebra.
A \textbf{Borel measure} is thus a measure on a Borel $\sigma$-algebra and a \textbf{Radon measure} is a Borel measure that is inner regular over compact sets.
A \textbf{centred Gaußian measure} on a locally convex space $X$ is a Borel probability measure with the property that the pushforward measures by elements of $X^*$ are centred Gaußians or the Dirac delta measure $\delta_0$ at the origin.
One can in general consider non-Radon Gaußian measures on locally convex spaces.
However, every Borel measure on the spaces $\mathcal{D}^*_\beta, (\mathcal{D}_+)^*_\beta$ and countable products thereof is automatically Radon \cite[Theorem 7.4.3]{src:Bogachev:MeasureTheory2}.

A subset $A \subseteq X$ is \textbf{$\bm{\mu}$-measurable} with respect to a measure $\mu$ on some $\sigma$-algebra $\mathcal{A}$ on $X$ if it is in the Lebesgue completion $\mathcal{A}_\mu$ of $\mathcal{A}$ with respect to $\mu$.
Similarly, a function $f: X \to [-\infty, \infty]$ is \textbf{$\bm{\mu}$-measurable} if the preimage of every Borel subset of $[-\infty,\infty]$ is in $\mathcal{A}_\mu$.
Likewise, $f: X \to [-\infty, \infty]$ is \textbf{$\bm{\mu}$-integrable} if $f$ is $\mu$-measurable and $\int |f| \mathrm{d}\mu < \infty$.
A subset $A \subseteq X$ is \textbf{$\bm{\mu}$-negligible} if it is a subset of some $B \in \mathcal{A}$ with $\mu(B) = 0$.

The \textbf{pushforward} of a Borel measure $\mu$ on a Hausdorff space $X$ by a continuous function $f : X \to Y$ to a Hausdorff space $Y$ will be denoted by $f_* \mu$.
It is automatically a Borel measure on $Y$ and if $\mu$ is Radon, so is $f_* \mu$ \cite[Theorem 9.1.1.(i)]{src:Bogachev:MeasureTheory2}.
The \textbf{convolution} of two Borel measures $\mu$ and $\nu$ on a Souslin locally convex space $X$ is given by $\mu * \nu = s_*(\mu \times \nu)$ where $s : X \times X \to X, (x,y) \mapsto x+y$.
This is well-defined by \cref{thm:SouslinBorelProduct}.

To every finite Borel measure $\mu$ on a locally convex space $X$ we associate its \textbf{characteristic function} $\hat{\mu} : X^* \to \mathbb{C}$ with
\begin{equation}
\phi \mapsto \int_X \exp \left[ i \phi \left( x \right) \right] \mathrm{d} \mu \left( x \right) \, .
\end{equation}
It is well-known that two Radon measures on a locally convex space are equal if and only if their characteristic functions are equal \cite[Lemma 7.13.5]{src:Bogachev:MeasureTheory2}.
Moreover, if $\mu$ is a centred Gaußian measure on $X$, its characteristic function is given by
\begin{equation}
\hat{\mu} \left( \phi \right)
=
\exp \left[ - \frac{1}{2} \left( \phi, \phi \right)_{L^2(\mu)} \right]
\end{equation}
for all $\phi \in X^*$ \cite[Theorem 2.2.4, Corollary 2.2.5]{src:Bogachev:GaußianMeasures}.
\begin{theorem}
\label{thm:SouslinMeasurability}
Let $f : X \to Y$ be a continuous map from a Souslin space $X$ to a Hausdorff space $Y$.
Then, for every Borel set $B \subseteq X$, $f(B)$ is measurable by any Radon measure on $Y$.
\end{theorem}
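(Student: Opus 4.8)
The plan is to reduce the statement to two classical facts about Souslin sets: that Borel subsets and continuous images of Souslin spaces are again Souslin, and that Souslin sets are measurable with respect to every Radon measure. So I would first argue that $f(B)$ is a Souslin subset of $Y$, and then quote the measurability of Souslin sets.

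Concretely: since $X$ is Souslin, fix a Polish space $P$ and a continuous surjection $g\colon P\to X$. As $g$ is continuous, $g^{-1}(B)$ is a Borel subset of the Polish space $P$, hence itself a continuous image of a Polish space $Q$, say via $h\colon Q\to g^{-1}(B)$; since $g$ is onto we have $g\bigl(g^{-1}(B)\bigr)=B$, so $g\circ h\colon Q\to B$ is a continuous surjection and $B$, with its subspace topology, is a Souslin space. (Equivalently, one may invoke directly that Borel subsets of Souslin spaces are Souslin, see \cite{src:Schwartz:RadonMeasures, src:Bogachev:MeasureTheory2}.) Then $f\circ g\circ h\colon Q\to Y$ is continuous with image $f\bigl(g(g^{-1}(B))\bigr)=f(B)$, so $f(B)$ is a Souslin subset of $Y$.

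It then remains to observe that every Souslin subset of a Hausdorff space lies in the Lebesgue completion of the Borel $\sigma$-algebra with respect to any Radon measure on that space; applying this to an arbitrary Radon measure on $Y$ finishes the proof. I expect this last measurability statement to be the only substantial ingredient — it is standard (see \cite{src:Schwartz:RadonMeasures} or the treatment of Souslin sets in \cite{src:Bogachev:MeasureTheory2}) and rests on representing a Souslin set through the Souslin operation applied to closed sets and running a Choquet-type capacitability argument against the measure, so I would simply cite it rather than reprove it. The remaining steps are routine manipulations of continuous surjections; the only mild subtlety is that open, and hence Borel, subsets of a Souslin space are Souslin even in the non-metrizable setting relevant here, such as $\mathcal{D}^*_\beta$, which is again covered by the cited references.
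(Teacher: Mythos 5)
Your proposal is correct and follows essentially the same route as the paper's proof: the paper likewise reduces the claim to the facts that Borel subsets of Souslin spaces are Souslin (citing Schwartz) and that continuous images of Souslin sets are measurable for every Radon measure (citing Bogachev, Theorem A.3.15), merely leaving implicit the Polish-space bookkeeping you spell out. The only cosmetic difference is that you unfold the definition of Souslin via continuous surjections from Polish spaces, whereas the paper quotes the two lemmas directly.
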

\begin{proof}
Since every Borel subset of a Souslin space is Souslin \cite[p. 96 Theorem 3]{src:Schwartz:RadonMeasures}, this follows directly from \cite[Theorem A.3.15]{src:Bogachev:GaußianMeasures}.
\end{proof}
\begin{corollary}
\label{cor:PushforwardMeasurability}
Let $p : X \to Y$ be a continuous map from a Souslin space $X$ to a Hausdorff space $Y$ and $\mu$ a Radon measure on $X$.
Then every function $f : Y \to [-\infty,\infty]$ with the property that $f \circ p$ is $\mu$-measurable is $(p_*\mu)$-measurable.
\end{corollary}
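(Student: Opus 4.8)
The plan is to reduce the statement to a purely set-theoretic assertion about preimages and then invoke \cref{thm:SouslinMeasurability}. By the definition of measurability of a function, it is enough to show the following: if $E \subseteq Y$ is a set whose preimage $p^{-1}(E)$ is $\mu$-measurable, then $E$ is $(p_*\mu)$-measurable. Granting this, one applies it to $E = f^{-1}(B)$ for each Borel $B \subseteq [-\infty,\infty]$ and uses $p^{-1}(f^{-1}(B)) = (f\circ p)^{-1}(B)$ to conclude. Note that $p_*\mu$ is a Radon measure on $Y$, so \cref{thm:SouslinMeasurability} is applicable with $p_*\mu$ playing the role of the Radon measure on $Y$.

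The naive attempt --- picking Borel sets $B_1 \subseteq p^{-1}(E) \subseteq B_2$ with $\mu(B_2 \setminus B_1) = 0$, pushing them forward, and squeezing $E$ between $p(B_1)$ and $Y \setminus p(X \setminus B_2)$ --- does not work, and I expect this to be the crux of the matter: since $p$ need not be injective, $p(B_2\setminus B_1)$ can have positive $p_*\mu$-measure even though $B_2 \setminus B_1$ is $\mu$-null (constant maps already exhibit this). The remedy is to restrict to a Borel set of full $\mu$-measure. Using that $p^{-1}(E)$ lies in the Lebesgue completion, I would choose a Borel set $B_1 \subseteq X$ and a Borel $\mu$-null set $N_0$ with $p^{-1}(E) \mathbin{\triangle} B_1 \subseteq N_0$, and put $X_0 := X \setminus N_0$. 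Then $p^{-1}(E) \cap X_0 = B_1 \cap X_0$ is Borel, and the elementary identity $E \cap p(S) = p\bigl(p^{-1}(E) \cap S\bigr)$, valid for any $S \subseteq X$, applied with $S = X_0$ gives $E \cap p(X_0) = p(B_1 \cap X_0)$. Since $B_1 \cap X_0$ and $X_0$ are Borel subsets of the Souslin space $X$, \cref{thm:SouslinMeasurability} shows that $p(B_1 \cap X_0)$ and $p(X_0)$, hence also $E \cap p(X_0)$, are $(p_*\mu)$-measurable.

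It then remains to dispose of $E \setminus p(X_0) \subseteq Y \setminus p(X_0)$ by showing the latter is $(p_*\mu)$-negligible. Writing the $(p_*\mu)$-measurable set $p(X_0)$ as $C \cup D$ with $C$ Borel and $D$ contained in a $(p_*\mu)$-null Borel set $D_0$, the inclusion $X_0 \subseteq p^{-1}(p(X_0)) \subseteq p^{-1}(C) \cup p^{-1}(D_0)$ yields $X \setminus p^{-1}(C) \subseteq N_0 \cup p^{-1}(D_0)$, which is a $\mu$-null Borel set; hence $(p_*\mu)(Y \setminus C) = \mu\bigl(X \setminus p^{-1}(C)\bigr) = 0$, and $Y \setminus p(X_0) \subseteq Y \setminus C$ is indeed $(p_*\mu)$-negligible. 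Consequently $E = (E \cap p(X_0)) \cup (E \setminus p(X_0))$ is $(p_*\mu)$-measurable. Since each step argues via Borel null supersets rather than by subtracting measures, no finiteness assumption on $\mu$ is needed.
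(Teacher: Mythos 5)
Your proof is correct, and it is worth comparing with the paper's, because the two diverge at exactly the point you flag as the crux. The paper writes $p^{-1}(f^{-1}(B)) = A \cup N_1$ with $A$ Borel and $N_1$ $\mu$-negligible, applies $p$ to obtain $f^{-1}(B)\cap p(X) = p(A)\cup p(N_1)$, and then asserts that $p(N_1)$ is \emph{clearly} $(p_*\mu)$-negligible. As you observe, images of $\mu$-null sets need not be $(p_*\mu)$-null when $p$ is not injective: with $X=[0,1]^2$, $p$ the first-coordinate projection, $\mu$ the Lebesgue measure carried on the fibre $\{0\}\times[0,1]$ and $E=\{0\}$, the admissible decomposition $A=\{0\}\times(0,1]$, $N_1=\{(0,0)\}$ gives $(p_*\mu)\bigl(p(N_1)\bigr)=1$; the paper's conclusion survives there only because $p(N_1)\subseteq p(A)$ happens to hold. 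Your remedy --- first discarding a Borel $\mu$-null set $N_0$ so that $p^{-1}(E)\cap X_0=B_1\cap X_0$ is genuinely Borel, then using the identity $E\cap p(X_0)=p\bigl(p^{-1}(E)\cap X_0\bigr)$ so that \cref{thm:SouslinMeasurability} is only ever applied to images of honest Borel sets --- avoids this entirely, and your separate verification that $Y\setminus p(X_0)$ is $(p_*\mu)$-negligible (via a Borel $C\subseteq p(X_0)$ with $\mu\bigl(X\setminus p^{-1}(C)\bigr)=0$) is sound and uses no finiteness of $\mu$. So you follow the same overall strategy as the paper --- reduce to \cref{thm:SouslinMeasurability} plus the negligibility of the complement of the image --- but with a decomposition that closes a genuine gap in the paper's own argument; keep your version.
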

\begin{proof}
First, note that $p(X)$ is $(p_*\mu)$-measurable by the preceeding theorem.
Now, letting $B \subset [-\infty,\infty]$ be a Borel set, we have
\begin{equation}
p^{-1} \left( f^{-1} \left( B \right) \right)
=
A \cup N_1
\end{equation}
for some Borel subset $A \subseteq X$ and some $\mu$-negligible set $N_1 \subseteq X$.
For brevity, let $N_2 = Y \setminus p(X)$, which is clearly $(p_* \mu)$-negligible.
Then,
\begin{equation}
\begin{aligned}
f^{-1} \left( B \right)
&=
\left[ f^{-1} \left( B \right) \cap p(X) \right]
\cup
\left[ f^{-1} \left( B \right) \cap N_2 \right] \\
&=
p \left( p^{-1} \left( f^{-1} \left( B \right) \right) \right)
\cup
\left[ f^{-1} \left( B \right) \cap N_2 \right] \\
&=
p \left(A \right) \cup p \left(N_1 \right) \cup \left[ f^{-1} \left( B \right) \cap N_2 \right] \, .
\end{aligned}
\end{equation}
$p (A )$ is $(p_* \mu)$-measurable by the preceeding theorem and $p (N_1 )$ as well as $f^{-1} ( B ) \cap N_2$ are clearly $(p_* \mu)$-negligible.
\end{proof}
We close this section by a simple lemma on positive semidefinite matrices.
\begin{lemma}[{\cite[Satz VII]{src:Schur:BilinearFormen}}]
\label{lem:PositiveSemidefinitenessPreserving}
Let $N \in \mathbb{N}$ and $A, B$ be positive semidefinite $N \times N$ matrices with respect to the standard inner product on $\mathbb{C}^N$.
Then the matrix $(A_{m,n} B_{m,n})_{m,n = 1}^N$ given by component-wise multiplication is positive semidefinite.
\end{lemma}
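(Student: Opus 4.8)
The plan is to diagonalise $B$ and thereby express the quadratic form associated with the componentwise product as a sum of quadratic forms associated with $A$, each of which is nonnegative by hypothesis. Since $B$ is positive semidefinite it is in particular Hermitian, so the spectral theorem provides an orthonormal basis $u_1, \dots, u_N$ of $\mathbb{C}^N$ together with eigenvalues $\lambda_1, \dots, \lambda_N \geq 0$ such that $B = \sum_{k=1}^N \lambda_k u_k u_k^*$. Writing $v_k := \sqrt{\lambda_k}\, u_k \in \mathbb{C}^N$, this is equivalent to $B_{m,n} = \sum_{k=1}^N (v_k)_m\, \overline{(v_k)_n}$ for all $m, n \in \{1, \dots, N\}$.

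Next I would fix an arbitrary $x \in \mathbb{C}^N$ and, for each $k$, introduce the vector $z^{(k)} \in \mathbb{C}^N$ with components $z^{(k)}_m := x_m\, \overline{(v_k)_m}$, so that $\overline{z^{(k)}_m} = \overline{x_m}\, (v_k)_m$. A direct rearrangement of the double sum then gives
\begin{equation}
\sum_{m,n=1}^N \overline{x_m}\, A_{m,n}\, B_{m,n}\, x_n
= \sum_{k=1}^N \sum_{m,n=1}^N \overline{z^{(k)}_m}\, A_{m,n}\, z^{(k)}_n
= \sum_{k=1}^N (z^{(k)})^* A\, z^{(k)}
\geq 0 \, ,
\end{equation}
where the final inequality holds term by term because $A$ is positive semidefinite. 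As $x$ was arbitrary and $(A_{m,n} B_{m,n})_{m,n=1}^N$ is plainly Hermitian (being the componentwise product of two Hermitian matrices), it is positive semidefinite, which is the claim.

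I do not expect a genuine obstacle here: the argument is a short reindexing. The only point requiring care is the bookkeeping of complex conjugates — the auxiliary vectors $z^{(k)}$ must be chosen so that the inner double sum is exactly $(z^{(k)})^* A z^{(k)}$ and not a variant spoilt by a misplaced conjugation, which is why $B$ is written in the form $\sum_k v_k v_k^*$ rather than $\sum_k v_k v_k^\top$. As an alternative route, one could instead observe that $(A_{m,n} B_{m,n})_{m,n=1}^N$ is a principal submatrix of the Kronecker product $A \otimes B$, which is positive semidefinite because its eigenvalues are the pairwise products of those of $A$ and $B$; since principal submatrices of positive semidefinite matrices are positive semidefinite, the statement follows.
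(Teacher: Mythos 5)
Your argument is correct and is essentially the paper's own proof: both diagonalise $B$ via the spectral theorem, write $B_{m,n}$ as a nonnegative combination of rank-one terms, and reindex the double sum so that each summand becomes a quadratic form in $A$ evaluated at a rescaled vector. The conjugation bookkeeping is handled correctly, and the Kronecker-product remark is a valid (standard) alternative, though not needed.
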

\begin{proof}
Diagonalising $B$ by a unitary matrix $U$, we obtain
\begin{equation}
B_{m,n}
=
\sum_{a = 1}^N U_{m,a}^* \lambda_a U_{n,a}
\end{equation}
for some nonegative numbers $\lambda_1, \dots, \lambda_N$.
Hence, for any $c \in \mathbb{C}^N$,
\begin{equation}
\sum_{m,n,a,b = 1}^N c_m^* A_{m,n} B_{m,n} c_n
=
\sum_{a = 1}^N \lambda_a \sum_{m,n = 1}^N \left( U_{m,a} c_m \right)^* A_{m,n} \left( U_{n,a} c_n \right)
\ge
0 \, .
\end{equation}
\end{proof}
\section{Reflection Positivity}
On $\mathbb{R}^{d+1}$ we define the operation of \textbf{time reflection} which we shall denote by $\theta: \mathbb{R}^{d+1} \to \mathbb{R}^{d+1}, (x_1, \dots, x_{d+1}) \mapsto (-x_1, x_2, \dots, x_{d+1})$.
By a slight abuse of notation, $\theta$ extends continuously and linearly to $\mathcal{D}$ and $\mathcal{D}^*_\beta$ in the obvious way.
\begin{definition}[{\cite[p. 90]{src:GlimmJaffe}}]
Let $\mu$ be a finite Borel measure on $\mathcal{D}^*_\beta$.
Then $\mu$ is \textbf{reflection positive} if for every sequence $(\phi_n)_{n \in \mathbb{N}}$ in $\mathcal{D}_+$, every sequence $(c_n)_{n \in \mathbb{N}}$ of complex numbers and every $N \in \mathbb{N}$,
\begin{equation}
\sum_{m,n = 1}^N c_m^* \hat{\mu} \left( \phi_m - \theta \phi_n \right) c_n \ge 0 \, .
\end{equation}
Furthermore, $\mu$ is \textbf{$\bm{\theta}$-invariant} if $\theta_* \mu = \mu$.
\end{definition}
To begin with, let us recapitulate two of the most important (in the author's opinion) theorems on reflection positive measures along with their proofs.
\begin{theorem}[{\cite[Theorem 6.2.3]{src:GlimmJaffe}}]
\label{thm:ReflectionPositiveL2}
Let $\mu$ be a finite, reflection positive Borel measure on $\mathcal{D}^*_\beta$ with the property that for every $\phi \in \mathcal{D}_+$ the function $\mathbb{R} \to \mathbb{C}, t \mapsto \hat{\mu}(t \phi)$ has an analytic continuation to some neighbourhood of zero in the complex plane.
Then, $( \phi, \theta \phi )_{L^2(\mu)} \ge 0$ for all $\phi \in \mathcal{D}_+$.
\end{theorem}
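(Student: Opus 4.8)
The plan is to extract the inequality $(\phi,\theta\phi)_{L^2(\mu)}\ge 0$ from reflection positivity by differentiating the defining inequality twice at the origin, which is exactly what the analyticity hypothesis is there to license. First I would fix $\phi\in\mathcal{D}_+$ and, for a real parameter $t$, consider the reflection-positivity inequality with $N=2$, the two test functions $\phi_1=0$ and $\phi_2=t\phi$, and coefficients $c_1,c_2$ to be chosen. Writing out the $2\times 2$ sum gives
\begin{equation}
|c_1|^2\,\hat{\mu}(0)+c_1^*c_2\,\hat{\mu}(-t\theta\phi)+c_2^*c_1\,\hat{\mu}(t\phi)+|c_2|^2\,\hat{\mu}(t\phi-t\theta\phi)\ge 0 \, .
\end{equation}
Since $\mu$ is a finite measure, $\hat{\mu}(0)=\mu(\mathcal{D}^*_\beta)<\infty$; a convenient normalisation is to assume $\mu$ is a probability measure (the general case follows by scaling, or is vacuous if $\mu=0$).

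The key step is to exploit the hypothesis that $t\mapsto\hat{\mu}(t\psi)$ extends analytically near $0$ for each fixed $\psi\in\mathcal{D}_+$ — applied here to $\psi=\phi$, $\psi=\theta\phi$ (note $\theta\phi$ need not lie in $\mathcal{D}_+$, so one should instead apply the hypothesis with the bilinear combination, or observe that $\hat\mu(-t\theta\phi)=\overline{\hat\mu(t\theta\phi)}$ is controlled by moments of $x\mapsto(\theta\phi)(x)=\phi(\theta x)$ which are finite by the same analyticity applied to a suitable element; the cleanest route is to note that analyticity of $t\mapsto\hat\mu(t\phi)$ forces $\int|\phi(x)|^k\,\mathrm{d}\mu<\infty$ for all $k$, hence all mixed moments in $\phi(x)$ and $(\theta\phi)(x)$ exist by Cauchy–Schwarz). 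Consequently each of the four characteristic-function values above is a smooth (indeed analytic) function of $t$ near $0$, with Taylor expansions $\hat\mu(t\phi)=1+it(\phi,1)_{L^2(\mu)}-\tfrac{t^2}{2}(\phi,\phi)_{L^2(\mu)}+o(t^2)$ and similarly for the others, where $(\phi,\psi)_{L^2(\mu)}=\int\phi(x)\psi(x)\,\mathrm{d}\mu(x)$.

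With these expansions in hand I would expand the left-hand side of the displayed inequality to second order in $t$. The constant term is $|c_1+c_2|^2\ge 0$ and carries no information, so I would choose $c_1,c_2$ so that the zeroth- and first-order terms cancel: take $c_2=1$ and $c_1=-1-it\,a$ for a real constant $a$ chosen to kill the $O(t)$ term, or more simply set $c_1=-1$, $c_2=1$ and check directly that the $O(1)$ and $O(t)$ contributions vanish, leaving
\begin{equation}
t^2\Bigl[(\phi,\theta\phi)_{L^2(\mu)}\Bigr] + o(t^2) \ge 0 \, .
\end{equation}
Dividing by $t^2>0$ and letting $t\to 0$ yields $(\phi,\theta\phi)_{L^2(\mu)}\ge 0$. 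The main obstacle I anticipate is bookkeeping: one must verify that the first-order terms really do cancel with the chosen coefficients (using $(\theta\phi,1)_{L^2(\mu)}$ versus $(\phi,1)_{L^2(\mu)}$, which are equal if $\mu$ is $\theta$-invariant but the theorem does not assume that — so the coefficients must be tuned $t$-dependently, e.g. $c_1=-\hat\mu(t\phi)/\hat\mu(t\phi-t\theta\phi)^{1/2}$-type choices, or one argues via a $2\times2$ positive-semidefinite matrix whose determinant must be nonnegative) and that the error terms are genuinely $o(t^2)$ uniformly enough to pass to the limit, which is precisely where analyticity (rather than mere continuity) of $t\mapsto\hat\mu(t\phi)$ is essential. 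A clean way to package all of this is to note that reflection positivity says the matrix $\bigl(\hat\mu(t_m\phi-t_n\theta\phi)\bigr)$ is positive semidefinite for any reals $t_1,\dots,t_N$; taking $N=2$, $t_1=0$, $t_2=t$ and demanding nonnegativity of the $2\times 2$ determinant gives $\hat\mu(0)\,\hat\mu(t\phi-t\theta\phi)\ge|\hat\mu(t\phi)|^2 = \hat\mu(t\phi)\hat\mu(-t\theta\phi)$ (the last equality by realness of $\mu$), and expanding both sides to order $t^2$ isolates $(\phi,\theta\phi)_{L^2(\mu)}\ge 0$ without any delicate choice of coefficients.
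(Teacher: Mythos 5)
Your primary argument is exactly the paper's proof: the paper takes $\psi_1=\lambda\phi$, $\psi_2=0$ with $c_1=\lambda^{-1}$, $c_2=-\lambda^{-1}$ (the same $2\times2$ quadratic form as your $c_1=-1$, $c_2=1$ choice, divided by $\lambda^{2}$), justifies taking $\lambda\to0$ under the integral via Lukacs' theorem on the existence of moment-generating functions (the same role your moment estimates are meant to play), and reads off $(\phi,\theta\phi)_{L^2(\mu)}$ as the surviving second-order coefficient. Two of your hedges deserve comment. First, the worry that the first-order terms might not cancel without $\theta$-invariance is unfounded: the linear term of the sum is $it\bigl[(\theta\phi,1)-(\phi,1)+(\phi,1)-(\theta\phi,1)\bigr]=0$ identically, so no $t$-dependent tuning of the coefficients is needed and the direct computation is complete as you wrote it. Second, the determinant ``packaging'' you offer as the clean fallback is the one variant that does not immediately close: realness of $\mu$ gives $|\hat{\mu}(t\phi)|^2=\hat{\mu}(t\phi)\hat{\mu}(-t\phi)$, not $\hat{\mu}(t\phi)\hat{\mu}(-t\theta\phi)$; the identity $\hat{\mu}(-t\theta\phi)=\overline{\hat{\mu}(t\phi)}$ instead follows from the Hermiticity that reflection positivity forces on the matrix $\bigl(\hat{\mu}(\phi_m-\theta\phi_n)\bigr)_{m,n}$, and only after using that observation (which also gives equality of the first moments of $T(\phi)$ and $T(\theta\phi)$) does the expanded determinant inequality yield $(\phi,\theta\phi)_{L^2(\mu)}\ge\bigl(\int T(\phi)\,\mathrm{d}\mu(T)\bigr)^2\ge 0$ rather than merely the centred covariance bound. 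So keep the linear-combination version as the proof and drop, or repair, the determinant remark.
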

\begin{proof}
For $\lambda > 0$ let $\psi_1 = \lambda \phi$, $\psi_2 = 0$, $c_1 = \lambda^{-1}$ and $c_2 = -\lambda^{-1}$.
Since $\mu$ is reflection positive, we obtain
\begin{equation}
\begin{aligned}
0 &\le \sum_{m,n = 1}^{2} c_m^* \hat{\mu} \left( \psi_m - \theta \psi_n \right) c_n \\
&=
\frac{1}{\lambda^2} \int_{\mathcal{D}^*_\beta}
\left(
\exp \left[ i \lambda T \left( \phi - \theta \phi \right) \right]
-
\exp \left[ - i \lambda T \left( \phi \right) \right]
-
\exp \left[ - i \lambda T \left( \theta \phi \right) \right]
+
1
\right)
\mathrm{d} \mu \left( T \right) \, .
\end{aligned}
\end{equation}
By a classical theorem of Lukacs \cite[p. 192]{src:Lukacs:CharacteristicFunctions}, the moment-generating functions of the pushforward measures $\phi_* \mu$, $(\theta \phi)_* \mu$ and $(\phi - \theta \phi)_* \mu$ exist as integrals in some neighbourhood of zero.
Consequently, we can take $\lambda \to 0$ under the integral and obtain
\begin{equation}
\lim_{\lambda \to 0}
\sum_{m,n = 1}^{2} c_m^* \hat{\mu} \left( \psi_m - \theta \psi_n \right) c_n
=
\int_{\mathcal{D}^*_\beta} T \left( \phi \right) T \left( \theta \phi \right) \mathrm{d} \mu \left( T \right)
=
\left< \phi, \theta \phi \right>_{L^2(\mu)} \ge 0 \, .
\end{equation}
\end{proof}
\begin{theorem}[{\cite[Theorem 6.2.2]{src:GlimmJaffe}}]
Let $\mu$ be a $\theta$-invariant Gaußian measure on $\mathcal{D}^*_\beta$.
Then $\mu$ is reflection positive if and only if $( \phi, \theta \phi )_{L^2(\mu)} \ge 0$ for all $\phi \in \mathcal{D}_+$.
\end{theorem}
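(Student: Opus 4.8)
The plan is to prove both implications separately, noting that the forward direction ($\mu$ reflection positive $\Rightarrow (\phi,\theta\phi)_{L^2(\mu)}\ge 0$) is essentially already covered: a $\theta$-invariant Gaußian measure has $\hat\mu(t\phi)=\exp[-\tfrac12 t^2(\phi,\phi)_{L^2(\mu)}]$, which is manifestly entire in $t$, so \Cref{thm:ReflectionPositiveL2} applies directly and yields $(\phi,\theta\phi)_{L^2(\mu)}\ge 0$ for all $\phi\in\mathcal D_+$. So the substantive work is the converse.

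For the converse, suppose $(\phi,\theta\phi)_{L^2(\mu)}\ge 0$ for all $\phi\in\mathcal D_+$. Fix $\phi_1,\dots,\phi_N\in\mathcal D_+$ and $c_1,\dots,c_N\in\mathbb C$. First I would rewrite the target sum using the Gaußian characteristic function: since $\mu$ is centred Gaußian,
\begin{equation}
\hat\mu(\phi_m-\theta\phi_n)=\exp\left[-\tfrac12(\phi_m-\theta\phi_n,\phi_m-\theta\phi_n)_{L^2(\mu)}\right].
\end{equation}
Expanding the inner product and using $\theta$-invariance of $\mu$ (so that $(\theta\phi_n,\theta\phi_n)_{L^2(\mu)}=(\phi_n,\phi_n)_{L^2(\mu)}$ and, more importantly, $(\phi_m,\theta\phi_n)_{L^2(\mu)}=(\theta\phi_m,\phi_n)_{L^2(\mu)}$ — i.e.\ the matrix $(\phi_m,\theta\phi_n)$ will turn out symmetric), this factors as
\begin{equation}
\hat\mu(\phi_m-\theta\phi_n)=\exp\left[-\tfrac12(\phi_m,\phi_m)_{L^2(\mu)}\right]\exp\left[-\tfrac12(\phi_n,\phi_n)_{L^2(\mu)}\right]\exp\left[(\phi_m,\theta\phi_n)_{L^2(\mu)}\right].
\end{equation}
Absorbing the first two factors into redefined coefficients $c_m'=c_m\exp[-\tfrac12(\phi_m,\phi_m)_{L^2(\mu)}]$, the sum becomes $\sum_{m,n}c_m'^*\,e^{(\phi_m,\theta\phi_n)_{L^2(\mu)}}\,c_n'$, so it suffices to show the matrix $M_{m,n}:=\exp[(\phi_m,\theta\phi_n)_{L^2(\mu)}]$ is positive semidefinite.

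The key observation is that $A_{m,n}:=(\phi_m,\theta\phi_n)_{L^2(\mu)}$ is itself a positive semidefinite matrix: for any $b\in\mathbb C^N$, write $b=u+iv$ with $u,v$ real; then $\sum_{m,n}b_m^* A_{m,n}b_n=(\sum_m u_m\phi_m,\theta\sum_n u_n\phi_n)_{L^2(\mu)}+(\sum_m v_m\phi_m,\theta\sum_n v_n\phi_n)_{L^2(\mu)}$ (the cross terms cancel by symmetry of $A$), and each summand is $\ge 0$ by the hypothesis applied to the test functions $\sum u_m\phi_m,\ \sum v_m\phi_m\in\mathcal D_+$. Here I need that $\mathcal D_+$ is a linear space closed under these real combinations, which is immediate, and that symmetry of $A$ holds, which follows from $\theta$-invariance of $\mu$ together with $\theta^2=\mathrm{id}$. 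Given that $A$ is positive semidefinite, $M_{m,n}=\exp(A_{m,n})=\sum_{k=0}^\infty \tfrac{1}{k!}(A_{m,n})^k$ is a convergent sum (in $m,n$, entrywise) of entrywise powers of $A$; by \Cref{lem:PositiveSemidefinitenessPreserving} each Hadamard power $(A_{m,n}^k)$ is positive semidefinite (including $k=0$, the all-ones matrix), positive semidefinite matrices form a closed convex cone, so $M$ is positive semidefinite. This completes the converse.

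\textbf{Main obstacle.} The only delicate point I anticipate is the bookkeeping around $\theta$-invariance: one must check carefully that $(\phi_m,\theta\phi_n)_{L^2(\mu)}$ is real and symmetric in $m,n$ (using $\theta_*\mu=\mu$ and $\theta^2=\mathrm{id}$), since otherwise the reduction to a Hermitian, hence legitimately "positive semidefinite", matrix $A$ does not go through and \Cref{lem:PositiveSemidefinitenessPreserving} cannot be invoked. Everything else — the Gaußian factorisation, the convergence of the exponential series entrywise, and the closedness of the positive semidefinite cone — is routine.
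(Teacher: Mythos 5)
Your proof is correct and follows essentially the same route as the paper: factor the Gaußian characteristic function using $\theta$-invariance, reduce to positive semidefiniteness of $\bigl(\exp[(\phi_m,\theta\phi_n)_{L^2(\mu)}]\bigr)_{m,n}$, establish that $\bigl((\phi_m,\theta\phi_n)_{L^2(\mu)}\bigr)_{m,n}$ is positive semidefinite from the hypothesis, and conclude via the power series of the exponential together with \cref{lem:PositiveSemidefinitenessPreserving}. Your explicit real/imaginary decomposition of the coefficient vector is just a spelled-out version of the paper's remark that $\theta$ extends to a positive semidefinite operator on the complexification of $\mathrm{span}\{\phi_n\}$.
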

\begin{proof}
$\Rightarrow$: This is clear by the preceeding \namecref{thm:ReflectionPositiveL2}.\\
$\Leftarrow$: Let $(\cdot, \cdot)$ denote the inner product in $L^2(\mu)$ and let $(\phi_n)_{n \in \mathbb{N}}$ be a sequence in $\mathcal{D}_+$, $(c_n)_{n \in \mathbb{N}}$ a sequence of complex numbers and $N \in \mathbb{N}$.
Then, $\theta$-invariance implies
\begin{equation}
\sum_{m,n = 1}^{N} c_m^* \hat{\mu} \left( \psi_m - \theta \psi_n \right) c_n
=
\sum_{m,n = 1}^{N} c_m^* \hat{\mu} \left( \phi_m \right) \exp \left[ ( \phi_m, \theta \phi_n ) \right] \hat{\mu} \left( \phi_n \right) c_n \, .
\end{equation}
Since $\hat{\mu}$ is real, the statement follows if $( \exp \left[ ( \phi_m, \theta \phi_n ) \right] )_{m,n = 1}^N$ is a positive semidefinite matrix.
Since $( \phi_m, \theta \phi_n ) = ( \theta \phi_m, \phi_n )$ by the $\theta$-invariance of $\mu$, $\theta$ extends to a positive semidefinite linear operator on the complexification of $\mathrm{span} \{ \phi_n : n \in \mathbb{N} \}$.
Consequently, $(( \phi_m, \theta \phi_n ) )_{m,n = 1}^N$ is positive semidefinite.
By decomposing the exponential as a power series, the claim now follows from \cref{lem:PositiveSemidefinitenessPreserving}.
\end{proof}
The main theorem of this article depends on the following simple property of a function with respect to $\theta$.
\begin{definition}
\label{def:ThetaSplitting}
A function $F : \mathcal{D}^*_\beta \to [-\infty,\infty]$ is called \textbf{$\bm{\theta}$-splitting} if there exists a function $G : (\mathcal{D}_+)^*_\beta \to [-\infty,\infty]$ such that
\begin{equation}
F = G \circ \pi_+ + G \circ \pi_+ \circ \theta \, .
\end{equation}
\end{definition}
\begin{theorem}
\label{thm:ReflectionPositivity}
Let $\mu$ be a $\theta$-invariant reflection positive centred Gaußian measure on $\mathcal{D}^*_\beta$.
Then, for any $\mu$-measurable $\theta$-splitting function $F : \mathcal{D}^*_\beta \to [-\infty,\infty]$ with $exp \circ F \in L^1(\mu)$, the finite Borel measure
\begin{equation}
\omega = \exp \left[ F \right] \cdot \mu
\end{equation}
is reflection positive.
\end{theorem}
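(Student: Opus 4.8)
The plan is to verify the defining inequality of reflection positivity for $\omega$ directly, reducing it to reflection positivity of $\mu$ by an $L^2(\mu)$-approximation argument. First I would record the factorisation the $\theta$-splitting hypothesis provides: setting $u := \exp \circ G \circ \pi_+$, a map $\mathcal{D}^*_\beta \to [0,\infty]$, \cref{def:ThetaSplitting} gives $\exp[F] = u \cdot (u \circ \theta)$. (One must check that $u$, equivalently $G$, may be taken measurable with respect to $(\pi_+)_*\mu$; since $F$ is $\mu$-measurable I would settle this at the outset, in the spirit of \cref{cor:PushforwardMeasurability}.) The essential feature is that $u$ is measurable with respect to the $\sigma$-algebra generated by $\pi_+$. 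Now fix $\phi_1,\dots,\phi_N \in \mathcal{D}_+$ and $c_1,\dots,c_N \in \mathbb{C}$. Because $\phi_m \in \mathcal{D}_+$ one has $T(\phi_m) = (\pi_+ T)(\phi_m)$ and $T(\theta\phi_n) = (\pi_+ \theta T)(\phi_n)$, so expanding the characteristic functions of $\omega = \exp[F]\cdot\mu$ and using $\exp[F] = u\cdot(u\circ\theta)$ yields
\begin{equation}
\sum_{m,n=1}^{N} c_m^* \hat{\omega}\left( \phi_m - \theta\phi_n \right) c_n
= \int_{\mathcal{D}^*_\beta} \psi(T)\, \overline{\psi(\theta T)}\, \mathrm{d}\mu(T),
\qquad \psi := V \cdot u ,
\end{equation}
where $V(T) := \sum_{m=1}^{N} c_m^* \exp[ i T(\phi_m) ]$ lies in $\mathcal{P}_+ := \mathrm{span}\{ \exp[ i T(\phi) ] : \phi \in \mathcal{D}_+ \}$ and is bounded. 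Note $\psi \cdot \overline{\psi \circ \theta} = V\cdot\overline{V\circ\theta}\cdot\exp[F] \in L^1(\mu)$ since $V$ is bounded and $\exp[F] \in L^1(\mu)$; in particular the integral is well-defined and, by $\theta$-invariance of $\mu$, real.

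Next I would isolate the following consequence of reflection positivity of $\mu$: for every $\chi$ in the closure of $\mathcal{P}_+$ in $L^2(\mu)$,
\begin{equation}
\int_{\mathcal{D}^*_\beta} \chi(T)\, \overline{\chi(\theta T)}\, \mathrm{d}\mu(T) \ge 0 .
\end{equation}
For $\chi = \sum_k a_k \exp[ i T(\phi_k) ] \in \mathcal{P}_+$ this reads, after using $(\theta T)(\phi_l) = T(\theta\phi_l)$ and collecting terms, $\sum_{k,l} a_k \bar{a}_l\, \hat{\mu}(\phi_k - \theta\phi_l) \ge 0$, which is exactly the defining inequality for $\mu$ with coefficients $\bar{a}_k$. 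The sesquilinear form $(f,g) \mapsto \int f\, \overline{g\circ\theta}\, \mathrm{d}\mu$ is bounded on $L^2(\mu) \times L^2(\mu)$, since $\theta$-invariance of $\mu$ gives $\| g\circ\theta\|_{L^2(\mu)} = \|g\|_{L^2(\mu)}$, so the inequality persists on the $L^2(\mu)$-closure of $\mathcal{P}_+$. Finally I would identify that closure as the whole space of $L^2(\mu)$-functions measurable with respect to the $\sigma$-algebra generated by $\pi_+$: this $L^2$ space is isometric, via $q \mapsto q \circ \pi_+$, to $L^2((\pi_+)_*\mu)$ on $(\mathcal{D}_+)^*_\beta$, in which $\mathcal{P}_+$ corresponds to the span of $\{ \exp[ i S(\phi) ] : \phi \in \mathcal{D}_+\}$; as $\mathcal{D}_+$ is reflexive, $\phi$ exhausts $((\mathcal{D}_+)^*_\beta)^*$, so if $h \in L^2((\pi_+)_*\mu)$ is orthogonal to all of these then the finite complex Radon measure $\bar{h}\cdot(\pi_+)_*\mu$ has vanishing characteristic function, hence vanishes, forcing $h = 0$.

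It then remains to bridge the gap that $\psi = V\cdot u$ need not lie in $L^2(\mu)$, because $u = \exp\circ G\circ\pi_+$ may fail to be square-integrable. Here I would truncate: put $u_k := \min(u, k)$ and $\psi_k := V\cdot u_k$ for $k \in \mathbb{N}$. Each $\psi_k$ is bounded and $\pi_+$-measurable, hence lies in the $L^2(\mu)$-closure of $\mathcal{P}_+$, so the preceding step gives $\int \psi_k\, \overline{\psi_k \circ \theta}\, \mathrm{d}\mu \ge 0$. Since $u \ge 0$, the products $u_k \cdot (u_k\circ\theta)$ increase pointwise to $u\cdot(u\circ\theta) = \exp[F]$, while $| \psi_k\, \overline{\psi_k\circ\theta}| = | V\,\overline{V\circ\theta}|\, u_k (u_k\circ\theta) \le \| V\|_\infty^2 \exp[F] \in L^1(\mu)$, so dominated convergence gives
\begin{equation}
\sum_{m,n=1}^{N} c_m^* \hat{\omega}\left( \phi_m - \theta\phi_n \right) c_n
= \lim_{k \to \infty} \int_{\mathcal{D}^*_\beta} \psi_k(T)\, \overline{\psi_k(\theta T)}\, \mathrm{d}\mu(T) \ge 0 .
\end{equation}
As $N$ and the $\phi_m, c_m$ were arbitrary, $\omega$ is reflection positive.

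The step I expect to be the main obstacle is precisely this last one: in the lattice situation one works in finite volume with an interaction bounded above, so $u$ is bounded and the sum in question is literally a squared norm of an element of the closure of $\mathcal{P}_+$; in the present generality $u$ can be unbounded and non-square-integrable, so one genuinely needs the monotone truncation together with the hypothesis $\exp[F] \in L^1(\mu)$ (and with $u \ge 0$, which is what makes $u_k\cdot(u_k\circ\theta) \uparrow \exp[F]$) in order to legitimise the limiting argument. A secondary, purely measure-theoretic point — the $(\pi_+)_*\mu$-measurability of $G$, needed so that each $u_k$ is a bona fide element of the $L^2(\mu)$-closure of $\mathcal{P}_+$ — I would dispose of at the beginning using \cref{cor:PushforwardMeasurability} and the $\mu$-measurability of $F$.
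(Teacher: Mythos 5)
Your proof takes a genuinely different route from the paper's, and apart from one measurability issue it is correct. The paper pushes $\mu$ forward under $T \mapsto (\pi_+ T, \pi_+\theta T)$, uses \cref{thm:ReflectionPositiveL2} and Minlos' theorem to write the image measure as a convolution $(P\times P) * (\Delta_* Q)$ of Gaussian measures on $(\mathcal{D}_+)^*_\beta$, and then exhibits the quadratic form as $\int \vert \sum_n c_n H_n(L)\vert^2\,\mathrm{d}Q(L)$. You never disintegrate: you extend the form $(f,g)\mapsto \int f\,\overline{g\circ\theta}\,\mathrm{d}\mu$, which is nonnegative on the diagonal of $\mathcal{P}_+ = \mathrm{span}\{\exp[iT(\phi)] : \phi\in\mathcal{D}_+\}$ precisely by reflection positivity of $\mu$, to the $L^2(\mu)$-closure of $\mathcal{P}_+$ by Cauchy--Schwarz and $\theta$-invariance, identify that closure with the $\pi_+$-measurable part of $L^2(\mu)$ via the injectivity of the Fourier transform on Radon measures, and reach $\psi = V\cdot u$ by truncation and dominated convergence, which is where $\exp\circ F\in L^1(\mu)$ enters. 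This is sound, and it buys something real: Gaussianity of $\mu$ is never used, so your argument would prove the statement for an arbitrary finite $\theta$-invariant reflection positive Borel measure on $\mathcal{D}^*_\beta$, whereas the paper's construction of $P$ and $Q$ from the covariance is tied to the Gaussian case. The step you single out as the main obstacle, the truncation $u_k = \min(u,k)$, is in fact unproblematic.

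The genuine gap is the point you call secondary: the measurability of $u = \exp\circ G\circ\pi_+$. The hypothesis gives $\mu$-measurability only of the sum $F = G\circ\pi_+ + G\circ\pi_+\circ\theta$; \cref{def:ThetaSplitting} imposes nothing on $G$. \cref{cor:PushforwardMeasurability} lets you descend from measurability of $G\circ\pi_+$ to measurability of $G$, but it says nothing about recovering measurability of the summand $G\circ\pi_+$ from that of $F$, and that implication is not formal: Fubini-type arguments yield measurability of almost every section of the summand, which does not by itself produce a globally $\mu$-measurable version of $u$. Without this, $u_k$ is not an element of $L^2(\mu)$ and the closure argument has nothing to act on. Note that the paper's proof is engineered to avoid exactly this issue: $G$ enters only through $F_2(T,K) = G(T)+G(K)$, whose measurability with respect to the pushforward measure does follow from that of $F$ by \cref{cor:PushforwardMeasurability}, and through the sections $T\mapsto G(T+L)$, whose $P$-measurability for $Q$-almost every $L$ is a Fubini consequence of the convolution decomposition. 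Your gap can be closed, for instance by running that decomposition once to produce a $(\pi_+)_*\mu$-measurable modification of $\exp\circ G$, or by adding the (harmless in practice) hypothesis that $G$ itself be measurable; but as written this step does not go through.
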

\begin{proof}
Define
\begin{equation}
j : \mathcal{D}^*_\beta \to ( \mathcal{D}_+ )^*_\beta \times ( \mathcal{D}_+ )^*_\beta \qquad
T \mapsto \left( \pi_+ T , \pi_+ \theta T \right) \, .
\end{equation}
$j$ is clearly continuous such that the pushforward measure $j_* \mu$ is a Radon measure $\nu$ on $( \mathcal{D}_+ )^*_\beta \times ( \mathcal{D}_+ )^*_\beta$.
Now, let
\begin{equation}
F_2 : ( \mathcal{D}_+ )^*_\beta \times ( \mathcal{D}_+ )^*_\beta \to \mathbb{R} \qquad
\left( T, K \right) \mapsto G \left( T \right) + G \left( K \right) \, .
\end{equation}
Then, for every $T \in \mathcal{D}^*_\beta$,
\begin{equation}
\left( F_2 \circ j \right) \left( T \right)
=
G \left( \pi_+ T \right) + G \left( \pi_+ \theta T \right)
=
F \left( T \right) \, ,
\end{equation}
such that $F_2$ is $\nu$-measurable by \cref{cor:PushforwardMeasurability}.
Turning to reflection positivity, let $(\phi_n)_{n \in \mathbb{N}}$ be a sequence in $\mathcal{D}_+$ and note that
\begin{equation}
\begin{aligned}
\hat{\omega} \left( \phi_m - \theta \phi_n \right)
&=
\int_{\mathcal{D}^*_\beta}
\exp \left[ i T \left(\phi_m \right) - i T\left( \theta \phi_n \right) + F \left(T \right) \right] \mathrm{d} \mu \left(T \right) \\
&=
\int_{\mathcal{D}^*_\beta}
\exp \left[ i \; j \left( T \right) \left( \phi_m, -\phi_n \right) + \left( F_2 \circ j \right) \left(T \right) \right] \mathrm{d} \mu \left(T \right) \\
&=
\int_{(( \mathcal{D}_+ )^*_\beta)^2}
\exp \left[ i T \left( \phi_m \right) - i K \left( \phi_n \right) + F_2 \left(T, K \right) \right] \mathrm{d} \nu \left(T, K \right) \\
&=
\int_{(( \mathcal{D}_+ )^*_\beta)^2}
\exp \left[ i T \left( \phi_m \right) - i K \left( \phi_n \right) + G \left(T \right) + G \left( K \right) \right] \mathrm{d} \nu \left(T, K \right) \, .
\end{aligned}
\end{equation}
The above expression suggests to find a disintegration of $\nu$ that separates the $T$ and $K$ variables.
To that end, recall that $\mu$ is Gaußian such that for any $\phi, \psi \in \mathcal{D}_+$, we have
\begin{equation}
\hat{\nu} \left( \phi, \psi \right)
=
\int_{\mathcal{D}^*_\beta} \exp \left[ i T \left( \phi \right) + i T \left( \theta \psi \right) \right] \mathrm{d} \mu \left( T \right)
=
\exp \left[ - \frac{1}{2} \left\Vert \phi + \theta \psi \right\Vert_{L^2(\mu)}^2 \right] \, .
\end{equation}
Furthermore, by \cref{thm:ReflectionPositiveL2}, Cauchy-Schwartz and the $\theta$-invariance of $\mu$,
\begin{equation}
0 \le \left< \phi, \theta \phi \right>_{L^2(\mu)} \le \left< \phi, \phi \right>_{L^2(\mu)} \, .
\end{equation}
Moreover, since $(\mathcal{D}_+)^*_\beta$ is a reflexive, nuclear, barrelled space, there exist uniquely determined Radon Gaussian measures $P$ and $Q$ on $(\mathcal{D}_+)^*_\beta$ with
\begin{align}
\hat{P} \left( \phi \right) &= \exp \left[ - \frac{1}{2} \left< \phi, \phi \right>_{L^2(\mu)} + \frac{1}{2} \left< \phi, \theta \phi \right>_{L^2(\mu)} \right] \, , \\
\hat{Q} \left( \phi \right) &= \exp \left[ - \frac{1}{2} \left< \phi, \theta \phi \right>_{L^2(\mu)} \right]
\end{align}
by Minlos theorem \cite[Theorem 7.13.9]{src:Bogachev:MeasureTheory2}.
Defining the diagonal map
\begin{equation}
\Delta : (\mathcal{D}_+)^*_\beta \to (\mathcal{D}_+)^*_\beta \times (\mathcal{D}_+)^*_\beta \qquad T \mapsto \left(T, T \right)
\end{equation}
it is clear that
\begin{equation}
\hat{\nu} \left( \phi, \psi \right)
=
\hat{P} \left( \phi \right) \hat{P} \left( \psi \right)
\hat{Q} \left( \phi + \psi \right)
=
\hat{P} \left( \phi \right) \hat{P} \left( \psi \right)
\widehat{\Delta_* Q} \left( \phi, \psi \right)
\end{equation}
for all $\phi, \psi \in \mathcal{D}_+$.
Equivalently, $\nu = (P\times P) * (\Delta_* Q)$ by \cref{thm:SouslinBorelProduct}.
Hence, it is straightforward to verify that
\begin{equation}
\begin{aligned}
\hat{\omega} \left( \phi_m - \theta \phi_n \right)
=
\int_{(( \mathcal{D}_+ )^*_\beta)^3}
\exp &\big[
i \left( T + L \right) \left( \phi_m \right) - i \left( K + L \right) \left( \phi_n \right) \\
&+ G \left(T + L \right) + G \left( K + L \right)
\big] \mathrm{d} \left( P \times P \times Q \right) \left(T, K, L \right) \, .
\end{aligned}
\end{equation}
Now, the functions
\begin{equation}
H_m \left(L \right)
=
\int_{( \mathcal{D}_+ )^*_\beta}
\exp \left[ - i \left( T + L \right) \left( \phi_m \right) + G \left( T + L \right) \right] \mathrm{d} P \left(T \right)
\end{equation}
for $m \in \mathbb{N}$ are well-defined $Q$-almost everywhere.
Thus, using Fubini, we arrive at
\begin{equation}
\sum_{m,n = 1}^N c_m^* \hat{\omega} \left( \phi_m - \theta \phi_n \right) c_n
=
\int_{( \mathcal{D}_+ )^*_\beta} \left\vert \sum_{n=1}^N c_n H_n \left( L \right) \right\vert^2 \mathrm{d} Q \left(L \right)
\ge 0
\end{equation}
for any $N \in \mathbb{N}$ and any sequence $(c_n)_{n \in \mathbb{N}}$ of complex numbers.
\end{proof}
This theorem is strikingly simple and can be applied very easily.
Let us call a locally convex space $X$ together with a continuous, linear map $j : X \to \mathcal{D}^*_\beta$ a \textbf{\bm{$\theta$}-model space}, if there is a continuous, linear operator (slight abuse of terminology) $\theta : X \to X$ such that $\theta \circ j = j \circ \theta$.
\begin{example}
Examples of such $\theta$-model spaces are e.g. function spaces on $\theta$-symmetric lattice subsets of $\mathbb{R}^{d+1}$, $\mathcal{D}$ or the space of Schwartz functions on $\mathbb{R}^{d+1}$ together with their respective usual injections into $\mathcal{D}^*_\beta$.
\end{example}
\begin{remark}
The above examples cover most of what is used in literature on Euclidean interacting quantum field theories and are also Souslin spaces.
\end{remark}
We may now extend the definition of a \textbf{$\bm{\theta}$-splitting} function to $\theta$-model spaces.
\begin{definition}
\label{def:ThetaSplittingGeneral}
A function $F : X \to [-\infty,\infty]$ on a $\theta$-model space $(X,j)$ is called \textbf{$\bm{\theta}$-splitting} if there exists a function $G : X \to [-\infty,\infty]$ such that
\begin{equation}
F = G \circ \pi^X_+ + G \circ \pi^X_+ \circ \theta
\end{equation}
Here, $\pi^X_+ : X \to X / j^{-1}(\ker \pi_+)$ is the canonical quotient map.
\end{definition}
\begin{corollary}
Let $(X, j)$ be a Souslin $\theta$-model space.
Furthermore, let $\mu$ be a Gaußian measure on $X$ with the property that $j_* \mu$ is $\theta$-invariant and reflection positive.
Then, for any $\mu$-measurable $\theta$-splitting function $F : X \to [-\infty,\infty]$ with $exp \circ F \in L^1(\mu)$, the finite Borel measure
\begin{equation}
\omega = j_* \left( \exp \left[ F \right] \cdot \mu \right)
\end{equation}
is reflection positive.
\end{corollary}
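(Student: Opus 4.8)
The plan is to reduce the corollary to \cref{thm:ReflectionPositivity} by pushing everything forward along $j$. First I would set $\nu = j_* \mu$, which by hypothesis is a $\theta$-invariant reflection positive centred Gaußian measure on $\mathcal{D}^*_\beta$ — centredness is inherited because pushforwards of centred Gaußians by continuous linear maps are centred Gaußians, and $\theta$-invariance together with reflection positivity are assumed outright. The measure whose reflection positivity we must establish is $\omega = j_*(\exp[F]\cdot\mu)$, so the natural strategy is to exhibit a $\theta$-splitting $\nu$-measurable function $\tilde F : \mathcal{D}^*_\beta \to [-\infty,\infty]$ with $\exp\circ\tilde F \in L^1(\nu)$ such that $j_*(\exp[F]\cdot\mu) = \exp[\tilde F]\cdot\nu$; then \cref{thm:ReflectionPositivity} applies verbatim.

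The key step is the construction of $\tilde F$ from the data $G : X \to [-\infty,\infty]$ witnessing that $F$ is $\theta$-splitting in the sense of \cref{def:ThetaSplittingGeneral}. The point of the quotient $\pi^X_+ : X \to X/j^{-1}(\ker\pi_+)$ in that definition is precisely that $\ker\pi^X_+ = j^{-1}(\ker\pi_+)$, so the map $X/j^{-1}(\ker\pi_+) \to (\mathcal{D}_+)^*_\beta$ induced by $\pi_+ \circ j$ is well-defined, continuous and injective; call it $\bar\jmath$, so that $\pi_+ \circ j = \bar\jmath \circ \pi^X_+$. One then wants a function $\tilde G : (\mathcal{D}_+)^*_\beta \to [-\infty,\infty]$ with $G = \tilde G \circ \bar\jmath$ on the image of $\pi^X_+$; defining $\tilde G$ arbitrarily (say as $0$) off the image of $\bar\jmath$ and then setting $\tilde F = \tilde G\circ\pi_+ + \tilde G\circ\pi_+\circ\theta$ makes $\tilde F$ $\theta$-splitting by definition, and the intertwining $\theta\circ j = j\circ\theta$ together with $\pi_+\circ j = \bar\jmath\circ\pi^X_+$ gives $\tilde F \circ j = G\circ\pi^X_+ + G\circ\pi^X_+\circ\theta = F$. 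The change-of-variables identity $j_*(\exp[F]\cdot\mu) = \exp[\tilde F]\cdot\nu$ then follows from $\tilde F\circ j = F$ by the standard pushforward formula $\int h\, \mathrm{d}(j_*(g\cdot\mu)) = \int (h\circ j)\, g\, \mathrm{d}\mu = \int (h\circ j)(\tilde F\circ j)_{\exp}\,\mathrm{d}\mu$ — more precisely $\int_{\mathcal{D}^*_\beta} h\, \mathrm{d}(\exp[\tilde F]\cdot\nu) = \int_X (h\circ j)\exp[\tilde F\circ j]\,\mathrm{d}\mu$ for bounded Borel $h$, which pins down the two finite Borel measures as equal on $\mathcal{D}^*_\beta$ (both are Radon, so equality of characteristic functions suffices if one prefers that route), and in particular $\exp\circ\tilde F \in L^1(\nu)$ because $\exp\circ F \in L^1(\mu)$.

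The main obstacle is measurability: I must check that $\tilde F$ is $\nu$-measurable. This is exactly what \cref{cor:PushforwardMeasurability} is designed for — with $p = j : X \to \mathcal{D}^*_\beta$ (continuous, $X$ Souslin, $\mu$ Radon since Borel measures on Souslin spaces of this type are Radon), the hypothesis is that $\tilde F \circ j$ is $\mu$-measurable, and $\tilde F\circ j = F$ which is $\mu$-measurable by assumption; hence $\tilde F$ is $(j_*\mu)$-measurable, i.e. $\nu$-measurable. One small point to address is that $\tilde F$ takes values in $[-\infty,\infty]$ rather than $\mathbb{R}$, but $\exp$ extends to $[-\infty,\infty] \to [0,\infty]$ and the $L^1$ hypothesis forces the sets where $F = +\infty$ (hence where $\tilde F\circ j = +\infty$) to be $\mu$-negligible, so this causes no difficulty. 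Finally, since the $\theta$-model space structure guarantees $\theta : \mathcal{D}^*_\beta \to \mathcal{D}^*_\beta$ is the genuine time reflection (it is the ambient $\theta$, with $\theta\circ j = j\circ\theta$ an identity of maps into $\mathcal{D}^*_\beta$), \cref{thm:ReflectionPositivity} applies directly to $\nu$ and $\tilde F$ to conclude that $\omega = \exp[\tilde F]\cdot\nu = j_*(\exp[F]\cdot\mu)$ is reflection positive.
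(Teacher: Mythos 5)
Your proposal is correct and follows essentially the same route as the paper: you build the same descended density (the paper's $G_2$ is exactly your $\tilde G$, defined via the injection induced by $\pi_+\circ j$ on $X/j^{-1}(\ker\pi_+)$ and set to $0$ off its image), verify $\tilde F\circ j=F$, invoke \cref{cor:PushforwardMeasurability} for $\nu$-measurability, and reduce to \cref{thm:ReflectionPositivity}. Your extra care with the change-of-variables identity $j_*(\exp[F]\cdot\mu)=\exp[\tilde F]\cdot(j_*\mu)$ and the $[-\infty,\infty]$-valued edge cases only makes explicit what the paper leaves implicit.
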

\begin{proof}
Let $G$ and $\pi^X_+$ be given as in \cref{def:ThetaSplittingGeneral} and define the function $G_2 : (\mathcal{D}_+)^*_\beta \to [-\infty,\infty]$ given by
\begin{equation}
T \mapsto \begin{cases}
G( \pi^X_+ x ) & \text{if } \exists \, x \in X : T = \pi_+ j x \\
0 & \text{else.}
\end{cases}
\end{equation}
To see that $G_2$ is well-defined, note that if $\pi_+ j x = \pi_+ j y$ for some $x,y \in X$, we have that there is some $T \in \ker \pi_+$ with $j (x - y) = T$, i.e. $x - y \in j^{-1}(\ker \pi_+) = \mathrm{ker} \, \pi^X_+$.
Now, define the function $F_2 : \mathcal{D}^*_\beta \to [-\infty,\infty]$ given by
\begin{equation}
T \mapsto G_2 \left( \pi_+ T \right) + G_2 \left( \pi_+ \theta T \right) \, .
\end{equation}
Clearly, $F_2 \circ j = F$ such that $F_2$ is $(j_* \mu)$-measurable by \cref{cor:PushforwardMeasurability}.
Consequently, $\omega = \exp[F_2] \cdot (j_* \mu)$ and \cref{thm:ReflectionPositivity} applies.
\end{proof}
We finish this article by a simple example.
\begin{example}
Let $\mathcal{S}$ denote the space of Schwartz functions on $\mathbb{R}^{d+1}$.
Define $j : \mathcal{S} \to \mathcal{D}^*_\beta$ by $j(\phi)(\psi) = \int_{\mathbb{R}^{d+1}} \psi \phi$ for all $\phi \in \mathcal{S}$ and $\psi \in \mathcal{D}$.
Moreover, let $\mu$ be a Gaußian measure on $\mathcal{S}$ with the property that $j_* \mu$ is $\theta$-invariant and reflection positive.
Note that this excludes the Gaußian measure on the space $\mathcal{S}^*$ of tempered distributions modelling the Klein-Gordon field.
However, regularised versions of that measure will work, see e.g. \cite[Example 6.2]{src:Ziebell:RigorousFRG}.
Furthermore, let $F: \mathcal{S} \to \mathbb{R}, \phi \mapsto -\lambda \int_{\mathbb{R}^{d+1}} \phi^4$ for some $\lambda > 0$.
Then,
\begin{equation}
F(\phi)
=
-\lambda \int_{\mathbb{R}_{> 0} \times \mathbb{R}^d} \phi^4
-\lambda
\int_{\mathbb{R}_{> 0} \times \mathbb{R}^d} \left( \theta \phi \right)^4
\end{equation}
provides a $\theta$-splitting of $F$.
\end{example}
\section*{Acknowledegments}
This work has been supported by the Deutsche Forschungsgemeinschaft
(DFG) under Grant No. 406116891 within the Research Training Group RTG 2522/1.
\appendix
\printbibliography
\end{document}